\documentclass[11pt]{article}
\usepackage{fullpage}

\usepackage{times}
\usepackage{comment,amsfonts,amssymb,amsmath,amsthm,graphicx}
\newcommand{\commentout}[1]{}

\ifx\pdftexversion\undefined
\usepackage[colorlinks,linkcolor=black,filecolor=black,citecolor=black,urlco
lor=black,pdfstartview=FitH]{hyperref}
\else
\usepackage[colorlinks,linkcolor=blue,filecolor=blue,citecolor=blue,urlcolor
=blue,pdfstartview=FitH]{hyperref}
\fi

\newcommand{\alert}[1]{\textbf{\color{red}
[[[#1]]]}\marginpar{\textbf{\color{red}**}}\typeout{ALERT:
\the\inputlineno: #1}}

\newcommand{\R}{\mathbb{R}}

\newcommand{\mommit}[1]{}
\newcommand{\namedref}[2]{\hyperref[#2]{#1~\ref*{#2}}}
\newcommand{\sectionref}[1]{\namedref{Section}{#1}}

\newcommand{\theoremref}[1]{\namedref{Theorem}{#1}}

\newcommand{\figureref}[1]{\namedref{Figure}{#1}}

\newcommand{\lemmaref}[1]{\namedref{Lemma}{#1}}

\newcommand{\questionref}[1]{\namedref{Question}{#1}}

\newtheorem{theorem}{Theorem}
\newtheorem{lemma}{Lemma}

\newtheorem{remark}{Remark}

\newtheorem{conjecture}{Conjecture}
\newtheorem{question}[conjecture]{Question}

\usepackage{pdfsync}

\begin{document}

\title{On the Impossibility of Dimension Reduction for Doubling Subsets of $\ell_p$, $p>2$}

\author{Yair Bartal\thanks{Hebrew University. Email: \texttt{yair@cs.huji.ac.il}.
    Supported in part by a grant from the Israeli Science Foundation (1609/11).
    }
\and
Lee-Ad Gottlieb\thanks{Ariel University. Email: \texttt{leead@ariel.ac.il}.}
\and
Ofer Neiman\thanks{Ben-Gurion University of the Negev. Email:
\texttt{neimano@cs.bgu.ac.il}. Supported in part by ISF grant No. (523/12) and by the European Union's Seventh Framework
Programme (FP7/2007-2013) under grant agreement $n^\circ 303809$.
}
}

\maketitle

\begin{abstract}
A major open problem in the field of metric embedding is the existence of dimension reduction for $n$-point subsets of Euclidean space, such that both distortion and dimension depend only on the {\em doubling constant} of the pointset, and not on its cardinality. In this paper, we negate this possibility for $\ell_p$ spaces with $p>2$. In particular, we introduce an $n$-point subset of $\ell_p$ with doubling constant $O(1)$, and demonstrate that any embedding of the set
into $\ell_p^d$ with distortion $D$ must have $D\ge\Omega\left(\left(\frac{c\log n}{d}\right)^{\frac{1}{2}-\frac{1}{p}}\right)$.
\end{abstract}

\section{Introduction}

Dimension reduction is one of the fundamental tools in algorithms design and a host of
related fields. A particularly celebrated result in this area is the Johnson-Lindenstrauss Lemma
\cite{jl}, which demonstrates that any $n$-point subset of $\ell_2$ can be embedded with arbitrarily
small distortion $1+\epsilon$ into $\ell_2^d$ with $d=O(\log n/\epsilon^2)$. The JL-Lemma has
found applications in such varied fields as machine learning \cite{AV-99,BBV-06}, compressive sensing \cite{BDDW08},
nearest-neighbor (NN) search \cite{KOR98}, information retrieval \cite{V04} and many more.

A limitation of the JL-Lemma is that it is quite specific to $\ell_2$, and in fact there are lower bounds that
rule out dimension reduction for the spaces $\ell_1$ and $\ell_\infty$. Yet essentially no non-trivial bounds are
known for $\ell_p$ when $p\notin\{1,2,\infty\}$. The prospect of dimension reduction for $\ell_p$ would imply
efficient algorithms for NN search and related proximity problems such as clustering, distance oracles and
spanners.

The doubling constant of a metric space $(X,d)$ is the minimal $\lambda$ such that any ball of radius $2r$ can be
covered by $\lambda$ balls of radius $r$, and the doubling dimension of $(X,d)$ is defined as $\log_2\lambda$.
A family of metrics is called {\em doubling} if the doubling constant of each of its members is bounded by some
constant. The doubling dimension is a measure of the {\em intrinsic dimensionality} of a point set.
In the past decade, it has been used in the development and analysis of
algorithms for fundamental problems such as nearest neighbor search \cite{KL04,BKL06,CG06} and
clustering \cite{ABS08,FM10}, for graph problems such as spanner construction
\cite{GGN06,CG06b,DPP06,GR08}, the traveling salesman problem \cite{Talwar04,BGK12},
and routing \cite{KSW04,Slivkins05,AGGM06,KRXY07}, and in machine learning \cite{Bshouty2009323,GKK10}.
Importantly, it has also been observed that the doubling dimension often bounds the quality of embeddings
for a point set, in terms of distortion and dimension
\cite{A83,GKL03,ABN08,CGT08,BRS11,GK11}.

It is known that the dimension bounds of the Johnson-Lindenstrauss Lemma are close to optimal. A simple volume
argument suggests that the set of $n$ standard unit vectors in $\R^n$ requires dimension at least
$\Omega(\log_Dn)$ to embed into in any Euclidean embedding with distortion $D$. Alon \cite{A03} extended this
lower bound to the low distortion regime and demonstrated that any embedding with $1+\epsilon$ distortion
requires $\Omega(\log n/(\epsilon^2\log(1/\epsilon)))$ dimensions, thus showing that the Johnson-Lindenstrauss
Lemma is nearly tight. However, this set of vectors has very high intrinsic dimension, as the doubling
constant is $n$. Hence, it is only natural to ask the following:

\begin{question}\label{question:d}
Do subsets of $\ell_2$ with constant doubling dimension embed into constant dimensional space with
low distortion?
\end{question}

This open question was first raised by \cite{LP01,GKL03}, and is considered among the most important
and challenging problems in the study of doubling spaces \cite{ABN08,CGT08,GK11,NN12,N13}.\footnote{
It follows from simple volume arguments that the best quantitative result that can be hoped for is an embedding with $1+\epsilon$ distortion using $O_\epsilon(\log\lambda)$ dimensions.}
In this work, we consider
the natural counterpart of Question \ref{question:d} for $\ell_p$ spaces ($p>2$), and resolve our
question in the negative:

\begin{theorem}\label{thm:main}
For any $p>2$ there is a constant $c=c(p)$ such that for any positive integer $n$, there is a subset
$A\subseteq\ell_p$ of cardinality $n$ with doubling constant $O(1)$, such that any embedding of $A$ into
$\ell_p^d$ with distortion at most $D$ satisfies
\[
D\ge\Omega\left(\left(\frac{c\log n}{d}\right)^{\frac{1}{2}-\frac{1}{p}}\right)~.
\]
\end{theorem}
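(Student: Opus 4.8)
The plan is to construct the hard instance $A$ as a subset of $\ell_p$ that "locally" looks like a high-dimensional object but has bounded doubling constant, and then to leverage a known dimension lower bound for that local object together with a counting/averaging argument over scales. Concretely, I would start from the standard $\ell_p$-lower-bound gadget — an appropriately scaled copy of the $m$ standard unit vectors $e_1,\dots,e_m$ in $\ell_p^m$ (or a section of the Hamming-type cube in $\ell_p$) — which is known to require dimension $d = \Omega(\log m)$ to embed with distortion $D$, and more quantitatively, by the volume/entropy argument behind Alon's bound adapted to $\ell_p$, a distortion-$D$ embedding into $\ell_p^d$ forces $D^{\,p/(p-2)\cdot\text{something}}$... — the exact exponent is what yields the $\tfrac12-\tfrac1p$ in the statement, so I would track it carefully: the point is that in $\ell_p^d$ one can pack only $\exp(O(d \cdot D^{?}))$ points that are mutually "far" in the relevant sense, and matching this against $m$ points gives $D \ge \Omega((c\log m / d)^{1/2-1/p})$.

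The key structural step is to assemble such gadgets across a geometric hierarchy of scales so that the whole set $A$ is $O(1)$-doubling while each gadget is preserved by any low-distortion embedding. I would take a recursive/laminar construction: at the top scale place one gadget of size $k$ at radius $1$; inside a small ball around each of its points, recursively place a shrunk copy (scaled by a factor $\ll 1/k$ so that the sub-gadgets are metrically "independent" and do not interfere across levels), continuing for $L$ levels so that $n \approx k^L$. Choosing the shrink factor and the gadget size $k$ as suitable constants (depending on $p$), a ball of radius $2r$ in $A$ meets only $O(1)$ gadgets at the relevant scale, each contributing $O(1)$ points at that resolution, giving doubling constant $O(1)$. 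This is the same "expander-like blow-up inside a doubling frame" philosophy used in prior impossibility results, and the $O(1)$ doubling bound here should be a direct, if slightly fiddly, scale-counting verification.

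Finally I would run the lower-bound argument on the assembled set. Given any embedding $f : A \to \ell_p^d$ with distortion $D$, I would argue that some single gadget, at some scale, must itself be embedded (after restriction and rescaling) with distortion $O(D)$ into $\ell_p^d$ — this uses that the gadgets are metrically well-separated so the global distortion bound localizes — and then invoke the single-gadget bound with $m = k$, the per-level gadget size. To boost $\log m = \log k = O(1)$ up to $\log n$, the trick is that there are $L = \Theta(\log n / \log k)$ levels, and a more careful version of the packing argument (summing the "dimension budget" consumed at each level, or equivalently applying the entropy bound to the full multiscale object at once) shows the effective $m$ is really $n$: i.e. one gets $D \ge \Omega((c\log n/d)^{1/2-1/p})$ rather than just the per-level bound. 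I expect the main obstacle to be exactly this last point — making the multiscale packing/entropy argument yield $\log n$ rather than $\log k$ while simultaneously keeping the doubling constant $O(1)$; these two requirements pull against each other (larger gadgets help the lower bound but hurt the doubling constant), and the construction has to be balanced so that the constant $c = c(p)$ comes out positive. Getting the $\ell_p$-specific packing exponent right (so that it produces precisely $\tfrac12 - \tfrac1p$, vanishing as $p\to 2$ in accordance with the JL lemma, and degrading as $p\to\infty$) is the other delicate quantitative piece.
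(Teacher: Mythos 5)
There is a genuine gap, and it sits exactly where you suspect: the ``boosting'' of $\log k$ to $\log n$. Worse than being merely unproven, the construction you describe cannot support it. If each level consists of constant-size gadgets shrunk by a factor $\ll 1/k$ and placed so that they are ``metrically independent'' across levels, the resulting set is essentially a bounded-branching hierarchically well-separated tree metric, and such sets embed into \emph{constant}-dimensional $\ell_p$ with constant distortion (this is the regime where Assouad-type embeddings succeed). The reason your accumulation fails is that packing/volume arguments do not add up across well-separated scales: an embedding can reuse the same $d$ coordinates at every scale, so each level consumes no fresh ``dimension budget.'' A volume bound applied to the whole multiscale object only sees the $O(k)$ points that are pairwise comparable at a single scale, and hence only ever certifies $\log k = O(1)$, not $\log n$. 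Separately, your single-gadget bound is also off: the $m$ unit vectors have doubling constant $m$, which is why they cannot be used at any scale with $m$ superconstant, and for constant $m$ the volume bound is vacuous once $d$ exceeds a constant.

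The missing idea is a mechanism by which distortion \emph{accumulates additively} across the $\Theta(\log n)$ levels, which forces the levels to interact rather than be independent. The paper does this with a Laakso-type construction inside $\ell_p$: each edge $\{a,b\}$ spawns child edges sharing its endpoints, plus a ``diagonal'' pair $\{u,v\}$ symmetric about the midpoint of $a,b$ and displaced into a fresh coordinate. One then tracks the potential $\Phi(a,b)=\|f(a)-f(b)\|_2^2/\|a-b\|_p^2$ and shows that, because the diagonal must not contract by more than $D$, some child edge has potential at least $\Phi(a,b)+(\epsilon/D)^2$. Since non-expansiveness caps the potential at $d^{1-2/p}$ (comparing $\ell_2$ and $\ell_p$ norms in $d$ coordinates), the number of levels is at most $d^{1-2/p}(D/\epsilon)^2$, and optimizing $\epsilon$ gives the claimed tradeoff; the $O(1)$ doubling bound is a separate geometric verification for this specific construction. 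Without some analogue of this per-level additive gain (or the LP-duality/information-theoretic global arguments used for $\ell_1$), your outline does not yield a bound depending on $\log n$.
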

This result is the first non-trivial lower bound on dimension reduction in $\ell_p$ spaces where $p\notin
\{1,2,\infty\}$. Additionally, it rules out a class of efficient algorithms for NN-search and the problems
discussed above.

\paragraph{Techniques.}
While there exist numerous techniques for obtaining dimension reduction lower bounds in $\ell_1$
(see related work below), these all seem to be very specific to $\ell_1$ and fail for $p\neq 1,\infty$.
Instead, we present a combinatorical proof for \theoremref{thm:main}, which utilizes a new method
based on potential functions.

The subset $A$ is based on a recursive graph construction which is very popular for obtaining
distortion-dimension tradeoffs \cite{L01,NR02,BC03,GKL03,LN04,LMN05,LM10,ACNN11,LS11,R12}. In these
constructions one starts with a small basic graph, then in each iteration replaces every edge with the
basic graph. In most constructions, the basic graph is very simple (e.g. a $4$-cycle induces the so
called diamond graph) and is often a series-parallel graph. This is very useful for $\ell_1$, as
\cite{GNRS99,CJLV08} showed these graphs embed to $\ell_1$ with constant distortion. However, these
recursive graphs often require distortion $\Omega(\log^{1/p}n)$ for embedding into $\ell_p$ with $p>2$ \cite{GKL03}, where $n$ is the number of vertices, so one cannot use them directly. The novelty in this work is that the instance we produce is not a
graph, but a certain subset of $\ell_p$ which is inspired by the Laakso graph \cite{L01}, the basic
graph for which is depicted in \figureref{fig:L}.

\begin{figure*}
\begin{center}
\includegraphics[scale=0.7]{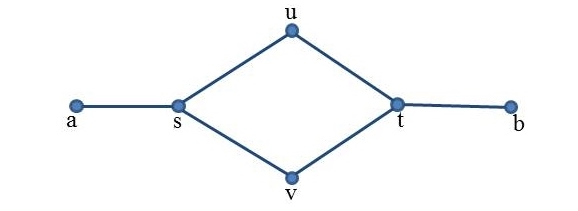}

\end{center}
\caption{The basic instance for the Laakso graph}\label{fig:L}
\end{figure*}

\subsection{Related Work}

Lafforgue and Naor \cite{LN13} have concurrently proved the same result as in \theoremref{thm:main} using analytic tools, with a construction based on the Heisenberg group.

There are several results on embedding metric spaces with low intrinsic dimension into low dimensional normed space with low distortion: Assouad \cite{A83} showed that the snowflakes of doubling metrics\footnote{For $0\le \alpha<1$, an $\alpha$-snowflake of a metric $(X,d)$ is the metric $(X,d^{\alpha})$, that is, all distances are taken to power $\alpha$.} embed with constant distortion into constant dimensional Euclidean space. In particular, this is a positive answer to \questionref{question:d} for this special case.

It is well known that arbitrary $n$ point metrics may require distortion $\Omega(\log n)$ for any embedding into Euclidean space \cite{llr}. It was shown by \cite{GKL03} that any doubling metric embeds with distortion only $O(\sqrt{\log n})$, and that this is best possible. Their result was generalized by \cite{KLMN04} to distortion $O(\sqrt{\log n\cdot\log\lambda})$, which was shown to be tight by \cite{JLM09}. As for low dimensional embedding, \cite{ABN08} showed that any doubling metric may be embedded with distortion $O(\log^{1+\theta}n)$ (for any fixed $\theta>0$) into Euclidean space of dimension proportional to $\log\lambda$, its intrinsic dimension. A trade-off between distortion and dimension for embedding doubling metrics was shown by \cite{CGT08}. For doubling subsets of $\ell_2$, \cite{GK11,BRS11} showed an embedding into constant dimensional $\ell_2$ with $1+\epsilon$ distortion for a {\em snowflake} of the subset. For $\alpha$-snowflakes of arbitrary doubling metrics, \cite{NN12} showed an embedding to Euclidean space where the dimension is a constant independent of $\alpha$ (while the distortion, due to a lower bound of \cite{S96,GKL03,LMN05}, must depend on it).

\paragraph{Lower bounds on dimension reduction} The first impossibility result on dimension reduction in $\ell_1$ is due to \cite{BC03}, who showed that there exists an $n$-point subset of $\ell_1$ that requires $d\ge n^{\Omega(1/D^2)}$ for any $D$-distortion embedding to $\ell_1^d$. Following their work, there have been many different proofs and extensions of this result, using various techniques. The original \cite{BC03} argument was based on linear programming and duality, then \cite{LN04} gave a geometric proof. For the $1+\epsilon$ distortion regime, \cite{ACNN11} used combinatorial techniques to show that the dimension must be at least $n^{1-O(1/\log(1/\epsilon))}$ (and also gave a different proof of the original result). Recently \cite{R12} applied an information theoretic argument to reprove the results of \cite{BC03,ACNN11}. As for {\em linear} dimension reductions, \cite{LMN05} showed a strong lower bound for $\ell_p$ with $p\neq 2$.

The instances used by the papers mentioned above are based on recursive graph constructions. The papers of \cite{BC03,LN04} used the diamond graph, which has high doubling constant, but \cite{LMN05} showed that their proof can be extended to the Laakso graph, yielding essentially the same result but for a subset of low doubling dimension.
For the $\ell_\infty$ space, there are also strong lower bounds,\footnote{For instance, the metric induced by an $n$-point expander graph (which is in $\ell_\infty$ as any other finite metric), requires dimension at least $n^{1/O(D)}$ in any $D$ distortion embedding.} which are based on large girth graphs \cite{matbook} measure concentration \cite{R03} and geometric arguments \cite{LMN05}.

There are few positive results for $p\neq 2$, such as \cite{KOR98} who showed that $\ell_1$ admits dimension reduction when the aspect ratio of the point set is bounded, and
\cite{BG13} used $p$-stable distributions to obtain similar results for all $1<p<2$, and the Mazur map to
obtain a (relatively high-distortion bound) for $p>2$. A weak form of dimensionality reduction in $\ell_1$ was shown by \cite{I06}.

\section{Construction}

Our construction is based on the Laakso graph \cite{L01}, but will lie in $\ell_p$ space. Abusing notation, we will refer to a pair of points as an edge, where some edges will have a level. Fix a parameter $0<\epsilon<1/8$, and for a positive integer $k$ we shall define recursively an instance $A_k=A_k(\epsilon)\subseteq\R^k$. Let $e_0,e_1,\dots,e_k$ be the standard orthonormal basis. In each instance $A_i$ certain pairs of points will be the {\em level $i$ edges}. The initial instance $A_0$ consists of the two points $e_0$ and $-e_0$, which are a level $0$ edge. $A_i$ is created from $A_{i-1}$ by adding to every level $i-1$ edge $\{a,b\}$, four new points $s,t,u,v$ as follows:
\begin{gather*}
s=\frac{3a}{4}+\frac{b}{4}\\
t=\frac{a}{4}+\frac{3b}{4}\\
u=\frac{a}{2}+\frac{b}{2}+ \epsilon\|a-b\|_p\cdot e_i\\
v=\frac{a}{2}+\frac{b}{2}- \epsilon\|a-b\|_p\cdot e_i~,
\end{gather*}
and we will have the following six level $i$ edges: $\{a,s\},\{s,u\},\{s,v\},\{u,t\},\{v,t\},\{t,b\}$. These edges are the {\em child edges} of $\{a,b\}$ (see \figureref{fig:L}). We will refer to the pair $\{u,v\}$ as a {\em diagonal}.

\section{Distortion-Dimension Tradeoff}

Fix any positive integers $d,D,k$, a real $p>2$, and let $\epsilon=\epsilon(d,D,p)$ be the parameter by which we construct the instance $A_k$. The precise value of $\epsilon$ will be determined later. Assume that there is a non-expansive embedding $f$ of $A_k$ into $\ell_p^d$ with distortion $D$, where for each $j\in[d]$ there is a map $f_j:A_k\to\R$ and $f=\bigoplus_{j=1}^df_j$. We want to show a tradeoff between the distortion $D$ and the dimension $d$. The argument is based on the tension between the edges and diagonals - the diagonals tend to contract and the edges to expand. To make this intuition precise, we employ the following potential function for each edge $\{a,b\}$,
\begin{equation}\label{eq:pot}
\Phi(a,b)=\frac{\|f(a)-f(b)\|_2^2}{\|a-b\|_p^2}~.
\end{equation}
Since the embedding is non-expansive, and the image has only $d$ coordinates
\[
\|a-b\|_p\ge\|f(a)-f(b)\|_p\ge\|f(a)-f(b)\|_2\cdot d^{1/p-1/2}~,
\]
raising to power $2$ and rearranging we obtain that the potential of any edge $\{a,b\}$ is never larger than
\begin{equation}\label{eq:pup}
\Phi(a,b)\le d^{1-2/p}~.
\end{equation}
The main goal, which is captured in the following Lemma, is to show that for a suitable choice of $\epsilon$, the potential increases (additively) by some positive number $\alpha=(\epsilon/D)^2$ at every level. Using \eqref{eq:pup} it must be that
\begin{equation}\label{eq:11}
k\le d^{1-2/p}/\alpha~,
\end{equation}
as otherwise the potential of some level $k$ edge will be at least $\alpha k>\alpha\cdot d^{1-2/p}/\alpha =d^{1-2/p}$. The intuition behind the proof of the Lemma is simple: if $\{a,b\}$ is a level $i-1$ edge with potential value $\phi$, then consider the diagonal $\{u,v\}$ created from it in level $i$. Since $u,v$ have the same distance to $a,b$, the potential will be minimized\footnote{More accurately, the maximum potential of a child edge of $\{a,b\}$ will be minimized..}  if $u,v$ are embedded into the same point in space. But in order to provide sufficient contribution for the diagonal $\{u,v\}$ we must have $u,v$ spaced out, which then causes some edges to expand, and thus increases the potential. The technical part of the proof balances between the loss in the potential (incurred because our instance lies in $\ell_p$ space), and the gain to the potential arising from the fact that $\|f(u)-f(v)\|_p$ must be large enough.
\begin{lemma}\label{lem:main}
There exists a constant $c=c(p)$ depending only on $p$ such that when $\epsilon\le d^{-1/p}\cdot D^{-2/(p-2)}/c$ the following holds. For any level $i-1$ edge $\{a,b\}$ with potential value $\phi=\Phi(a,b)$, there exists a level $i$ edge with potential value at least $\phi+(\epsilon/D)^2$.
\end{lemma}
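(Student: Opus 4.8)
The plan is to fix a level $i-1$ edge $\{a,b\}$, set $L=\|a-b\|_p$, and let $\psi$ be the maximum of $\Phi$ over the six child edges $\{a,s\},\{s,u\},\{s,v\},\{u,t\},\{v,t\},\{t,b\}$; it then suffices to prove $\psi\ge\phi+\alpha$, where $\alpha=(\epsilon/D)^2$. First I would record the $\ell_p$ side-lengths inside the gadget: since $e_i$ is orthogonal to every coordinate used in $A_{i-1}$, we have $\|a-s\|_p=\|t-b\|_p=L/4$, each of the four ``diamond'' edges has length $\|s-u\|_p=\|s-v\|_p=\|u-t\|_p=\|v-t\|_p=L\ell$ with $\ell:=(4^{-p}+\epsilon^p)^{1/p}$, and $\|u-v\|_p=2\epsilon L$. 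Writing $\ell^2=\tfrac1{16}(1+\eta)$, one has $\eta=(1+(4\epsilon)^p)^{2/p}-1\le(4\epsilon)^p$, a quantity that is extremely small because $p>2$ and $\epsilon<1/8$; this $\eta$ is precisely the $\ell_p$-loss the argument must control.

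The engine is the Euclidean quadrilateral identity
\[
\|x-z\|_2^2+\|y-w\|_2^2=\|x-y\|_2^2+\|y-z\|_2^2+\|z-w\|_2^2+\|w-x\|_2^2-\|x-y+z-w\|_2^2,
\]
applied to the $4$-cycle $s\!-\!u\!-\!t\!-\!v\!-\!s$, whose diagonals are $\{s,t\}$ and $\{u,v\}$. Bounding each of the four squared images of the diamond edges by $\psi\,(L\ell)^2$ yields $\|f(s)-f(t)\|_2^2+\|f(u)-f(v)\|_2^2\le 4\psi L^2\ell^2$. On the other hand the distortion hypothesis forces the diagonal to be long: since $\|\cdot\|_p\le\|\cdot\|_2$ on $\R^d$ for $p>2$,
\[
\|f(u)-f(v)\|_2\ \ge\ \|f(u)-f(v)\|_p\ \ge\ \|u-v\|_p/D\ =\ 2\epsilon L/D.
\]
Combining, $\|f(s)-f(t)\|_2^2\le 4L^2(\psi\ell^2-\alpha)$; in particular $\psi\ell^2\ge\alpha$, so the right-hand side is non-negative.

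Next I would traverse the ``short path'' $a\!-\!s\!-\!t\!-\!b$. The potential bounds on the two end-edges give $\|f(a)-f(s)\|_2,\|f(t)-f(b)\|_2\le\sqrt\psi\,L/4$, so by the triangle inequality in $\ell_2$,
\[
\|f(a)-f(b)\|_2\ \le\ \tfrac{\sqrt\psi}{2}\,L+2L\sqrt{\psi\ell^2-\alpha},
\]
hence $\phi=\|f(a)-f(b)\|_2^2/L^2\le\bigl(\tfrac12\sqrt\psi+2\sqrt{\psi\ell^2-\alpha}\bigr)^2$. It remains to show that, under the hypothesis $\epsilon\le d^{-1/p}D^{-2/(p-2)}/c$ for a suitable $c=c(p)$, the right-hand side is at most $\psi-\alpha$; this closes the argument, since then $\phi\le\psi-\alpha$, i.e. $\psi\ge\phi+\alpha$. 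Substituting $\ell^2=\tfrac1{16}(1+\eta)$, expanding the square, and squaring once more (each step legitimate once one checks the relevant quantities are non-negative — this uses $\psi\ell^2\ge\alpha$ and the a priori bound $\psi\le d^{1-2/p}$ from \eqref{eq:pup}), the leading $\psi^2$ contributions cancel and the inequality comes down to requiring a positive term of order $\psi\alpha$ to dominate a term of order $\psi^2\eta$ (together with strictly smaller terms), i.e. to an estimate of the form $\psi\,\eta\le c_0\,\alpha$ for an absolute constant $c_0$. Since $\psi\le d^{1-2/p}$, $\eta\le 4^p\epsilon^p$ and $\alpha=\epsilon^2/D^2$, this becomes $d^{1-2/p}4^p\epsilon^p\le c_0\epsilon^2/D^2$, i.e. $\epsilon^{p-2}\le c_0\,4^{-p}d^{-(1-2/p)}D^{-2}$; taking $(p-2)$-th roots and using the identity $(1-2/p)/(p-2)=1/p$ gives exactly $\epsilon\le c(p)^{-1}d^{-1/p}D^{-2/(p-2)}$, with $c(p)$ absorbing the factor $4^{p/(p-2)}$ and the constant $c_0$.

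The step I expect to be the genuine obstacle is this last one: one must carry out the two squarings carefully, checking non-negativity of both sides at each stage so that the a priori potential bound $\psi\le d^{1-2/p}$ and the lower bound $\psi\ge\alpha/\ell^2$ are invoked in the right places, and one must verify that every error term — the $\eta^2$, $\alpha\eta$ and $\alpha^2/\psi$ contributions — is indeed dominated, so that $c(p)$ ends up depending only on $p$ and not on $d$ or $D$. Conceptually everything sits in the quadrilateral identity: pulling the diagonal $\{u,v\}$ apart by $2\epsilon L/D$ ``costs'' the diamond edges extra squared length, which is exactly an additive gain of order $(\epsilon/D)^2$ in the potential; the remainder is bookkeeping of the $\ell_p$-versus-$\ell_2$ loss, which is governed entirely by the tiny quantity $\eta\approx\epsilon^p$ and is therefore negligible once $\epsilon$ is below the stated threshold.
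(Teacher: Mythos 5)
Your proof is correct, and the route is genuinely different from the paper's. The paper works coordinate-by-coordinate: it introduces deviations $\Delta_j(u),\Delta_j(v)$ of $f(u),f(v)$ from the midpoint $\bigl(f(a)+f(b)\bigr)/2$, uses the distortion lower bound on $\|f(u)-f(v)\|_p$ to get $\sum_j\Delta_j(u)^2\ge(\epsilon/D)^2$ for one of $u,v$, applies the median/parallelogram law on the triangle $a,u,b$ to conclude that one of $\Phi'(u,a),\Phi'(u,b)$ exceeds $\Phi(a,b)+4(\epsilon/D)^2$, and finally pushes that excess down to a true child edge $\{a,s\}$ or $\{u,s\}$ via a case split, paying the $\ell_p$-loss factor $(1+(4\epsilon)^p)^{1/p}$ only on the $\{u,s\}$ branch. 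Your argument is more symmetric: you bound all six child potentials at once by their maximum $\psi$, apply the Euclidean quadrilateral identity to the cycle $s$--$u$--$t$--$v$ so that the forced diagonal length $\|f(u)-f(v)\|_2\ge 2\epsilon L/D$ squeezes $\|f(s)-f(t)\|_2^2\le 4L^2(\psi\ell^2-\alpha)$, and then the triangle inequality along $a$--$s$--$t$--$b$ yields $\phi\le\bigl(\tfrac12\sqrt\psi+2\sqrt{\psi\ell^2-\alpha}\bigr)^2$. This avoids both of the paper's case splits at the cost of a heavier terminal squaring, which you correctly flag as the delicate step. Carrying it out, one indeed finds the leading $\psi^2$ terms cancel and the inequality reduces (after discarding favourable $\eta^2$, $\alpha\eta$ and $\alpha^2$ terms) to an estimate of the form $\psi\eta\lesssim\alpha$; combined with $\psi\le d^{1-2/p}$, $\eta\le(4\epsilon)^p$ and $\alpha=\epsilon^2/D^2$, this gives precisely the stated threshold with a constant of the same shape $c(p)=\Theta\bigl(4^{p/(p-2)}\bigr)$ as in the paper. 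Both proofs hinge on the same convexity mechanism — spreading out $f(u),f(v)$ forces a squared-length surplus of order $(\epsilon/D)^2$ — but yours packages it in a single global identity rather than the paper's local, one-edge-at-a-time accounting.
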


\begin{proof}
Let $s,t,u,v$ be the four new points introduced in $A_i$ from the edge $\{a,b\}$ as described above. For each $j\in [d]$ and $u,v$ we shall define $\Delta_j(u),\Delta_j(v)\in\R$ to be the values such that
\begin{gather}
f_j(u)=\frac{f_j(a)}{2}+\frac{f_j(b)}{2}+\Delta_j(u)\cdot\|a-b\|_p\label{eq:fu}\\
f_j(v)=\frac{f_j(a)}{2}+\frac{f_j(b)}{2}+\Delta_j(v)\cdot\|a-b\|_p\label{eq:fv}~.
\end{gather}
In what follows we show that the sum of the squares of the values $\Delta_j(u),\Delta_j(v)$ must be large, because the diagonal $\{u,v\}$ has sufficient contribution.
Since the embedding has distortion $D$ we have that $\|u-v\|_p/D\le\|f(u)-f(v)\|_p$. As $\|u-v\|_p=2\epsilon\|a-b\|_p$ it must be that
\begin{eqnarray*}
2\epsilon/D\cdot\|a-b\|_p&\le&\|f(u)-f(v)\|_p\\
&=& \|a-b\|_p\left(\sum_{j=1}^d|\Delta_j(u)-\Delta_j(v)|^p\right)^{1/p}\\
&\le& \|a-b\|_p\left(\sum_{j=1}^d2^{p-1}|\Delta_j(u)|^p+\sum_{j=1}^d2^{p-1}|\Delta_j(v)|^p\right)^{1/p}~.
\end{eqnarray*}
Hence for at least one of $u,v$, say w.l.o.g for $u$, it follows that
\[
\left(\sum_{j=1}^d|\Delta_j(u)|^p\right)^{1/p}\ge\epsilon/D~.
\]
Using that the $\ell_2$ norm is larger than the $\ell_p$ norm for $p>2$ we get that
\begin{equation}\label{eq:sum}
\sum_{j=1}^d\Delta_j(u)^2\ge(\epsilon/D)^2~.
\end{equation}
Next we consider the following two quantities:
\begin{gather*}
\Phi'(u,a)=\frac{4\|f(u)-f(a)\|_2^2}{\|a-b\|_p^2}~,\\
\Phi'(u,b)=\frac{4\|f(u)-f(b)\|_2^2}{\|a-b\|_p^2}~.
\end{gather*}
Note that by \eqref{eq:fu}
\begin{eqnarray*}
\Phi'(u,a)&=&\frac{4}{\|a-b\|_p^2}\sum_{j=1}^d\left(\frac{f_j(b)-f_j(a)}{2}+\Delta_j(u)\cdot\|a-b\|_p\right)^2\\
&=&\Phi(a,b)+\sum_{j=1}^d4\Delta_j(u)^2+\sum_{j=1}^d\frac{4(f_j(b)-f_j(a))\cdot\Delta_j(u)}{\|a-b\|_p} ~.
\end{eqnarray*}
Similarly using \eqref{eq:fv}
\begin{eqnarray*}
\Phi'(u,b)&=&\frac{4}{\|a-b\|_p^2}\sum_{j=1}^d\left(\frac{f_j(b)-f_j(a)}{2}-\Delta_j(u)\cdot\|a-b\|_p\right)^2\\
&=&\Phi(a,b)+\sum_{j=1}^d4\Delta_j(u)^2-\sum_{j=1}^d\frac{4(f_j(b)-f_j(a))\cdot\Delta_j(u)}{\|a-b\|_p}~.
\end{eqnarray*}
As the two terms only differ by the sign before $\sum_{j=1}^d\frac{4(f_j(b)-f_j(a))\cdot\Delta_j(u)}{\|a-b\|_p}$, we conclude that at least one of them, assume w.l.o.g $\Phi'(u,a)$, must be at least
\begin{equation}\label{eq:low}
\Phi'(u,a)\ge \Phi(a,b)+\sum_{j=1}^d4\Delta_j(u)^2~.
\end{equation}
Now we shall consider
\begin{gather*}
\Phi'(a,s)=\frac{16\|f(a)-f(s)\|_2^2}{\|a-b\|_p^2}~,\\
\Phi'(u,s)=\frac{16\|f(u)-f(s)\|_2^2}{\|a-b\|_p^2}~.
\end{gather*}
Observe that since $\|a-s\|_p=\frac{1}{4}\|a-b\|_p$ we have that
\begin{equation}\label{eq:as}
\Phi'(a,s)=\Phi(a,s)~.
\end{equation}
Since $e_i$ is a unit vector orthogonal to the subspace in which $s$ lie, we have that $\|u-s\|_p=\frac{1}{4}\|a-b\|_p(1+(4\epsilon)^p)^{1/p}$, and then
\begin{equation}\label{eq:us}
\Phi'(u,s)=(1+(4\epsilon)^p)^{1/p}\cdot\Phi(u,s)~.
\end{equation}
Next, note that
\begin{eqnarray*}
\Phi'(u,a)&=&\frac{4\|f(u)-f(s)+f(s)-f(a)\|_2^2}{\|a-b\|_p^2}\\
&\le&\frac{8\left(\|f(a)-f(s)\|_2^2+\|f(s)-f(u)\|_2^2\right)}{\|a-b\|_p^2}\\
&=&\frac{1}{2}(\Phi'(u,s)+\Phi'(a,s))~,
\end{eqnarray*}
so one of $\Phi'(u,s)$, $\Phi'(a,s)$ is at least as large as $\Phi'(u,a)$.
If it is the case that $\Phi'(a,s)\ge\Phi'(u,a)$, then the assertion of the Lemma is proved for the level $i$ edge $\{a,s\}$, because by \eqref{eq:as}, \eqref{eq:sum} and \eqref{eq:low}
\[
\Phi(a,s)=\Phi'(a,s)\ge\Phi'(u,a)\ge \Phi(a,b)+\sum_{j=1}^d4\Delta_j(u)^2\ge \Phi(a,b)+4(\epsilon/D)^2~.
\]
So from now on we focus on the case that $\Phi'(u,s)\ge\Phi'(u,a)$. The same calculation shows that
\[
\Phi'(u,s)\ge\Phi'(u,a)\ge \Phi(a,b)+4(\epsilon/D)^2~,
\]
and by \eqref{eq:us}
\[
\Phi(u,s)\ge\frac{\Phi(a,b)+4(\epsilon/D)^2}{(1+(4\epsilon)^p)^{1/p}}~.
\]
We will use that for $0<x<1/2$, $e^x\le 1+2x$, then as $\epsilon<1/8$,
\[
(1+(4\epsilon)^p)^{1/p}\le e^{(4\epsilon)^p/p}\le 1+2(4\epsilon)^p/p\le 1+(4\epsilon)^p\le\frac{1}{1-(4\epsilon)^p}~.
\]
Using this, that $1-(4\epsilon)^p>1/2$ and by \eqref{eq:pup},
\begin{eqnarray*}
\Phi(u,s)&\ge&\left(\Phi(a,b)+4(\epsilon/D)^2\right)\cdot(1-(4\epsilon)^p)\\
&\ge&\Phi(a,b)-(4\epsilon)^p\cdot d^{1-2/p} + 2(\epsilon/D)^2~.
\end{eqnarray*}
Recall that $\epsilon\le d^{-1/p}\cdot D^{-2/(p-2)}/c$, and we can set $c=4^{p/(p-2)}$ so that
\[
2(\epsilon/D)^2-(4\epsilon)^p\cdot d^{1-2/p}\ge (\epsilon/D)^2~,
\]
which satisfies the assertion of the Lemma for the edge $\{u,s\}$.
\end{proof}

Finally, let us prove the main Theorem.\footnote{Our proof fails for $p=2$ because this is the place where the loss of $(4\epsilon)^p\cdot d^{1-2/p}$ overcomes the gain of $2(\epsilon/D)^2$.}
\begin{proof}[Proof of \theoremref{thm:main}]
The set $A_k$ has a doubling constant $O(1)$ as established in \sectionref{sec:doubling}. The number of points in $A_k$ is $n=\Theta(6^k)$, so $k=\Theta(\log n)$. Using \eqref{eq:11} we have that
\[
\Omega(\log n)\le d^{1-2/p}\cdot (D/\epsilon)^2 = d^{1-2/p}\cdot D^2\cdot d^{2/p}\cdot  D^{4/(p-2)}\cdot c^2=c^2d\cdot D^{2p/(p-2)}~.
\]
So if one fixes the dimension $d$, we conclude that the distortion must be at least
\begin{equation}\label{eq:trade}
D\ge\Omega_p\left(\left(\frac{\log n}{d}\right)^{1/2-1/p}\right)~.
\end{equation}
\end{proof}

\begin{remark}
A similar calculation shows that if the embedding of $A_k$ is done into $\ell_q$ for some $q\ge 1$, then the tradeoff is
\[
D\ge\Omega_p\left(\frac{\log ^{1/2-1/p}n}{d^{\frac{\max\{q-2,2-q\}}{2q}}}\right)~,
\]
(this is better than \eqref{eq:trade} for $p/(p-1)\le q\le p$).
\end{remark}

\section{$A_k$ is Doubling}\label{sec:doubling}

To prove that instance $A_k$ is doubling, we will construct a recursive instance $B_k \supset A_k$ with infinite
number of points, and show that $B_k$ has constant doubling dimension. Since $A_k$ is a subset of $B_k$, it must
have constant doubling dimension as well \cite{GK13}. $B_k$ is constructed as $A_k$, except that for any level $i$ edge $\{a,b\}$ we add all the points on the line segment connecting $a$ to $b$.
The instance $B_k$ consists of these segments. For the sake of analysis we need that a level $i$ segment will also belong to all the next levels, where in level $i+j$ it will be partitioned to $4^j$ equal length segments. Note that a level $i$ segment created from the edge $\{a,b\}$ will have $8$ segments in level $i+1$ - the $6$ level $i+1$ edges and in addition the segment between $s,t$ is partitioned to two level $i+1$ segments. Observe that each point intersects at most $4$ segments.

First consider a level $i$ edge $\{a,b\}$ in $B_i$ and its child edges in $B_{i+1}$. If $\{a,b\}$ has length $r$,
then the child edges $\{a,s\},\{t,b\}$ have length $\frac{r}{4}$, and the child edges
$\{s,u\},\{s,v\},\{u,t\},\{v,t\}$ have length $\frac{r\cdot(1+(4\epsilon)^p)^{1/p}}{4}$. Since all points in the child
segments are within distance $\epsilon r$ of the parent segment, we may conclude that for any $r$-length segment, all its descendant points are within distance
\begin{equation}\label{eq:kkl}
\epsilon r \sum_{i=0}^{\infty} \left(\frac{(1+(4\epsilon)^p)^{1/p}}{4}\right)^i < 2 \epsilon r~,
\end{equation}
(as $\epsilon<1/8$) from the parent segment.

Define the distance between two segments to be the distance between the closest pair of points on the segments. We say that segments are {\em disjoint} if their intersection is empty.
We will need the following lemma.

\begin{lemma}\label{lem:adjacent}
Let $E$ be a collection of disjoint segments in $B_k$ (from all levels) with lengths in the range
$[r,4r]$ and inter-segment distance at most $10 \epsilon r$. Then $|E| = O(1)$.
\end{lemma}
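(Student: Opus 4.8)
The plan is to exploit the self-similar, near-one-dimensional structure of the segments in $B_k$: every segment of length between $r$ and $4r$ is, up to a small $O(\epsilon)$ perturbation, a straight line segment obtained by a contraction-by-$4$ recursion, and such segments essentially live on the parent line they descend from. I would first fix a segment $\sigma_0\in E$ of length in $[r,4r]$ and let $L_0$ be the affine line containing it. By \eqref{eq:kkl}, every descendant of any level-$i$ segment stays within $2\epsilon$ times its length of the parent segment; running this bound \emph{up} the recursion tree, any segment $\sigma\in E$ of length $\ge r$ is within $O(\epsilon r)$ of some ancestor segment of length $\Theta(1)\cdot r$ or larger, and conversely all segments in $E$ that are ``close'' to $\sigma_0$ (within $10\epsilon r$) must share a common low-level ancestor whose length is $O(r)$. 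So all of $E$ is confined to an $O(\epsilon r)$-neighborhood of a bounded collection ($O(1)$ many, by the degree-$8$ branching at a single level) of line segments of length $O(r)$.

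Next I would argue that on a single such line $L$, the number of members of $E$ is $O(1)$. The members of $E$ lying near $L$ are (perturbations of) subsegments of $L$ coming from the recursive partition; at the generation where segments have length $\Theta(r)$ there are only $O(1)$ of them along a length-$O(r)$ piece of $L$, and going to deeper generations only \emph{shortens} segments below $r$, so they cannot be in $E$. The disjointness hypothesis is what rules out piling up many length-$\Theta(r)$ segments from different levels on top of each other: two segments in $B_k$ from different levels are either nested (one contains the other, up to the $\epsilon$-perturbation) or disjoint, and nested segments of comparable length $[r,4r]$ would have to essentially coincide — contradicting disjointness unless they are literally the same segment. I would make this precise by a counting/packing argument: project everything onto $L$, note each projected segment has length $\ge r-O(\epsilon r)\ge r/2$, they are pairwise ``almost disjoint'' (overlap $O(\epsilon r)$), and they all lie in an interval of length $O(r)$, hence there are $O(1)$ of them; then multiply by the $O(1)$ lines to conclude $|E|=O(1)$.

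The main obstacle I anticipate is handling the $\epsilon$-perturbations rigorously: in $B_k$ the segments are genuinely straight, but the recursive construction tilts each generation's ``long'' children ($\{s,u\},\{s,v\},\{u,t\},\{v,t\}$) off the parent line by a factor $(1+(4\epsilon)^p)^{1/p}$, so the notions ``lies on the parent line'' and ``nested'' only hold approximately, and I need the approximation errors (which are summed over levels as in \eqref{eq:kkl}) to stay below the $10\epsilon r$ threshold so that near-disjointness survives the projection. The key quantitative fact making this work is exactly the geometric-series bound $\epsilon r\sum_i((1+(4\epsilon)^p)^{1/p}/4)^i<2\epsilon r$ from \eqref{eq:kkl} together with $\epsilon<1/8$, which guarantees the cumulative wobble is a small constant multiple of $\epsilon r$ and hence the projected segments remain pairwise almost-disjoint length-$\Omega(r)$ intervals inside an $O(r)$-length window. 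Once that is in place the final packing count is immediate.
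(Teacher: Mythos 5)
Your approach is genuinely different from the paper's, but as written it has a real gap at the crucial step.

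The paper's strategy is to prove, by a careful induction over the recursive structure, that any two \emph{disjoint} segments in $B_k$ with lengths in $[r,4r]$ are at ($\ell_p$-)distance at least $\tfrac{\epsilon}{2}r$. It then lifts the segments of $E$ to their great-grandparents, shows these must all be pairwise \emph{adjacent} (non-disjoint) because the inherited distances are below that threshold, bounds the number of mutually adjacent segments by $4$, and multiplies by the bounded branching factor. Your strategy instead projects all of $E$ onto a line $L$ and appeals to a one-dimensional packing count. The trouble is the assertion that the projections of members of $E$ onto $L$ ``are pairwise `almost disjoint' (overlap $O(\epsilon r)$).'' This is asserted, not proved, and it is precisely the content that requires work. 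Disjointness of two segments in $B_k$ does not control the overlap of their projections onto $L$: two segments can be disjoint, both within $O(\epsilon r)$ of $L$, and yet project to nearly identical intervals — this happens whenever the recursion produces near-parallel branches displaced perpendicular to $L$ (e.g.\ descendants of the $\{s,u\}$ branch versus descendants of the $\{s,t\}$ sub-segment, or the $u$ versus $v$ sides of a diamond). Ruling this out quantitatively requires tracking, level by level, how far apart such parallel branches must be — which is essentially the inductive distance lower bound the paper establishes via the $D$- and $F$-instance case analysis around the branching points $u,v,s,t$. Your brief remark that segments are ``either nested or disjoint'' does not do this job and is not even quite accurate for the parallel-branch pairs above. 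Similarly, the sentence ``going to deeper generations only shortens segments below $r$, so they cannot be in $E$'' ignores that $E$ can contain segments from different branches at the \emph{same} length scale, which is exactly where the overlap issue arises.

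In short: the skeleton (confine $E$ to $O(1)$ ancestral pieces, then count) is reasonable and could in principle be made to work, but the packing step needs a genuine lemma about separation (or bounded multiplicity) of disjoint same-scale segments, and you have not supplied one. Either prove the separation bound by the kind of local induction the paper uses, or replace the projection/overlap claim with a direct bound on how many disjoint segments of $B_k$ at a given scale can share a projected window on $L$; as it stands the proposal does not close.
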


\begin{proof}
We first claim that the distance between any two disjoint segments in $B_k$ of length in $[r,4r]$ is at least
$\frac{\epsilon}{2} r$. To prove this, note that it holds trivially for any pair of disjoint segments in
$B_1$, and so it also holds between descendants of these disjoint segments. It remains to prove
that the claim holds between descendants of adjacent segment pairs in $B_1$. We will consider all adjacent
segment pairs in turn, first focusing on the pair of line segments intersecting at $u$ (equivalently, at $v$)
and then on the four line segments intersecting at $s$ (or $t$).

Let us take the segment pair which intersects at $u$, and call this pair instance $D$: Expanding $D$ to the
children of the edges, it is easily verified that the claim holds between disjoint child segment pairs, and
so it must hold between their descendants as well. We must show that the claim holds between the descendants
of adjacent child segment pairs. Observe that all adjacent segment pairs are covered by three instances: Two
$B$ instances (a scaled copy of $B_1$) cover the respective children of each parent segment, and one $D$ instance covers the child
segment pair which (like their parents) intersect at $u$. Then the proof has successfully progressed a single
level in the construction.

Now let instance $F$ be the four adjacent edges intersecting at $s$. We expand $F$ to its children, and it is
readily verified that the claim holds between pairs of disjoint child segments, and therefore between their
descendants as well. We must show that the claim holds between descendants of adjacent child segment pairs.
As above, the adjacent pairs can all be covered by five instances, four of type $B$ and one of type $F$. It
follows that the proof of the claim holds inductively throughout all levels.

Now consider the collection $E$. Note that the great-grandparents of the segments in $E$ have length at least
$r' = r\cdot(4/(1+(4\epsilon)^p)^{1/p})^3\ge 30r$, and at most $r\cdot 4^3\le 4r'$. By the assertion of the Lemma and \eqref{eq:kkl} the distance between these great-grandparents is at most $10 \epsilon r +4\epsilon r< \frac{\epsilon}{2} r'$, and the claim above shows they must all be non-disjoint. The maximum number of mutually adjacent segments is $4$, so we have at most $4$ great-grandparents to the segments of $E$. Since a single segment begets
eight other segments in each level, $|E| \le 4 \cdot 8^3$ segments.
\end{proof}

We can now bound the doubling dimension of the point set. Suppose there exists a set $S \in B_k$ of points
with inter-point distances in the range $[\gamma,2\gamma]$, then the doubling constant is at most $|S|^2$ \cite{GK13}. For each point $x\in S$, we find its closest ancestral segment $s(x)$ of
length at least $\frac{\gamma}{4\epsilon}$ (and at most $\frac{\gamma}{\epsilon}$) and project $x$ onto $s(x)$. By \eqref{eq:kkl} the distance of $x$ from $s(x)$ is at most $\frac{\gamma}{2}$, thus the projected points have inter-point distance in the range
$[\frac{\gamma}{2},\frac{5\gamma}{2}]$. The distance range implies that any segment can have at most $6$ projected points incident upon it. By \lemmaref{lem:adjacent}, the number
of segments of length in $[r,4r]$ where $r = \frac{\gamma}{4\epsilon}$ with inter-segment distance at most
$\frac{5\gamma}{2} = 10 \epsilon r$ is bounded by a constant.

\bibliographystyle{alpha}
\bibliography{art,adi}

\end{document}